\newcommand{\mbC}{\mathbb C}
\newcommand{\oM}{\overline{\mathcal M}}
\def\d{{\partial}}
\newcommand{\eps}{\varepsilon}
\newcommand{\Coef}{\mathrm{Coef}}
\newcommand{\mcF}{\mathcal{F}}
\newcommand{\hcA}{\widehat{\mathcal{A}}}
\renewcommand{\top}{\mathrm{top}}
\renewcommand{\top}{\mathrm{top}}
\newcommand{\un}{{1\!\! 1}}
\newcommand{\tmcF}{\widetilde{\mathcal{F}}}
\newcommand{\cA}{\mathcal{A}}
\newcommand{\intt}{\mathrm{int}}
\newcommand{\boun}{\mathrm{boun}}
\newcommand{\full}{\mathrm{full}}
\newtheorem{theorem}{Theorem}[section]
\newtheorem{lemma}[theorem]{Lemma}
\newtheorem{expectation}[theorem]{Expectation}
\theoremstyle{remark}
\newtheorem{remark}[theorem]{Remark}
\theoremstyle{definition}
\newtheorem{definition}[theorem]{Definition}
\numberwithin{equation}{section}
\begin{document}

\title{Open topological recursion relations in genus $1$ and integrable systems}

\author{Oscar Brauer}
\address{O. Brauer:\newline
School of Mathematics, University of Leeds, \newline
Leeds, LS2 9JT, United Kingdom}
\email{mmobg@leeds.ac.uk}

\author{Alexandr Buryak}
\address{A. Buryak:\newline 
Faculty of Mathematics, National Research University Higher School of Economics, \newline
6 Usacheva str., Moscow, 119048, Russian Federation; and \newline 
Center for Advanced Studies, Skolkovo Institute of Science and Technology,\newline
1 Nobel str., Moscow, 143026, Russian Federation}
\email{aburyak@hse.ru}

\begin{abstract}
The paper is devoted to the open topological recursion relations in genus $1$, which are partial differential equations that conjecturally control open Gromov--Witten invariants in genus $1$. We find an explicit formula for any solution analogous to the Dijkgraaf--Witten formula for a descendent Gromov--Witten potential in genus $1$. We then prove that at the approximation up to genus $1$ the exponent of an open descendent potential satisfies a system of explicitly constructed linear evolutionary PDEs with one spatial variable. 
\end{abstract}

\date{\today}

\maketitle

\section{Introduction}

{\it Total descendent potentials}, also called {\it formal Gromov--Witten potentials}, are certain formal power series of the form
$$
\mcF(t^*_*,\eps)=\sum_{g\ge 0}\eps^{2g}\mcF_g(t^*_*)\in\mbC[[t^*_*,\eps]],
$$
where $N\ge 1$, and $t^\alpha_a$, $1\le\alpha\le N$, $a\ge 0$, and $\eps$ are formal variables, appearing in various curve counting theories in algebraic geometry including Gromov--Witten theory, Fan--Jarvis--Ruan--Witten theory, and the more recent theory of Gauged Linear Sigma Models. The number~$N$ is often called the {\it rank}. Typically, the coefficients of total descendent potentials are the integrals of certain cohomology classes over moduli spaces of closed Riemann surfaces with additional structures. The function $\mcF_g$ controls the integrals over the moduli spaces of Riemann surfaces of genus $g$. The simplest example of a total descendent potential is the Witten generating series $\mcF^W(t_0,t_1,t_2,\ldots,\eps)$ of intersection numbers on the moduli space of stable Riemann surfaces of genus $g$ with $n$ marked points $\oM_{g,n}$. Note that here and below we omit the upper indices in the $t$-variables when the rank is $1$.

\medskip

There is a unified approach to total descendent potentials using the notion of a cohomological field theory (CohFT)~\cite{KM94} and the Givental group action~\cite{Giv01a,Giv01b,Giv04}. Briefly speaking, the generating series of correlators of CohFTs form the space of {\it total ancestor potentials}, and then using the lower-triangular Givental group action one gets the whole space of total descendent potentials (see e.g. \cite[Section~2]{Sha09} and~\cite{FSZ10}). 

\medskip

There is a remarkable and deep relation between total descendent potentials and the theory of nonlinear PDEs. One of its manifestations is the following system of PDEs for the descendent potential in genus $0$ (see e.g. \cite[Section~2]{Sha09} and~\cite[Corollary 4.13]{FSZ10}):
\begin{align}
&\frac{\d\mcF_0}{\d t^1_0}=\sum_{a\ge 0}t^\alpha_{a+1}\frac{\d\mcF_0}{\d t^\alpha_a}+\frac{1}{2}\eta_{\alpha\beta}t^\alpha_0 t^\beta_0,\label{eq:closed string-0}\\
&\frac{\d^3\mcF_0}{\d t^\alpha_{a+1}\d t^\beta_b\d t^\gamma_c}=\frac{\d^2\mcF_0}{\d t^\alpha_a\d t^\mu_0}\eta^{\mu\nu}\frac{\d^3\mcF_0}{\d t^\nu_0\d t^\beta_b\d t^\gamma_c},\quad 1\le\alpha,\beta,\gamma\le N,\quad a,b,c\ge 0,\label{eq:closed TRR-0}
\end{align}
called the {\it string equation} and the {\it topological recursion relations in genus $0$}, respectively. Here $(\eta_{\alpha\beta})=\eta$ is an $N\times N$ symmetric nondegenerate matrix with complex coefficients, the constants~$\eta^{\alpha\beta}$ are defined by $(\eta^{\alpha\beta}):=\eta^{-1}$, and we use the Einstein summation convention for repeated upper and lower Greek indices. Note that the system of equations~\eqref{eq:closed TRR-0} can be equivalently written~as
\begin{gather*}
d\left(\frac{\d^2\mcF_0}{\d t^\alpha_{a+1}\d t^\beta_b}\right)=\frac{\d^2\mcF_0}{\d t^\alpha_a\d t^\mu_0}\eta^{\mu\nu}d\left(\frac{\d^2\mcF_0}{\d t^\nu_0\d t^\beta_b}\right),\quad 1\le\alpha,\beta\le N,\quad a,b\ge 0,
\end{gather*}
where $d \left( \cdot \right)$ denotes the full differential. 

\medskip

There are equations similar to~\eqref{eq:closed TRR-0} in genus~$1$ (see e.g. \cite[Equation~(1.7)]{EGX00}):
\begin{gather}\label{eq:closed TRR-1}
\frac{\d\mcF_1}{\d t^\alpha_{a+1}}=\frac{\d^2\mcF_0}{\d t^\alpha_a\d t^\mu_0}\eta^{\mu\nu}\frac{\d\mcF_1}{\d t^\nu_0}+\frac{1}{24}\eta^{\mu\nu}\frac{\d^3\mcF_0}{\d t^\mu_0\d t^\nu_0\d t^\alpha_a},\quad 1\le\alpha\le N,\quad a\ge 0.
\end{gather}
They are called the {\it topological recursion relations in genus $1$}. These equations imply that 
\begin{gather}\label{eq:formula for the closed genus 1 potential}
\mcF_1=\frac{1}{24}\log\det(\eta^{-1}M)+G(v^1,\ldots,v^N),
\end{gather}
where the $N\times N$ matrix $M=(M_{\alpha\beta})$ is defined by $M_{\alpha\beta}:=\frac{\d^3\mcF_0}{\d t^1_0\d t^\alpha_0\d t^\beta_0}$, $v^\alpha:=\eta^{\alpha\mu}\frac{\d^2\mcF_0}{\d t^\mu_0\d t^1_0}$, and $G(t^1,\ldots,t^N):=\left.\mcF_1\right|_{t^*_{\ge 1}=0}$ \cite{DW90} (see also \cite[Equation~(1.16)]{DZ98}). Equations similar to~\eqref{eq:closed TRR-0} and~\eqref{eq:closed TRR-1} exist in all genera, but their complexity grow very rapidly with the genus (see e.g.~\cite{Liu07} for some results in genus~$2$).

\medskip

One can see that equations~\eqref{eq:closed string-0},~\eqref{eq:closed TRR-0},~\eqref{eq:closed TRR-1} are universal, meaning that they do not depend on a total descendent potential. On the other hand, there is a rich theory~\cite{DZ01} of hierarchies of evolutionary PDEs with one spatial variable associated to total descendent potentials and containing the full information about these potentials. Conjecturally, for any total descendent potential~$\mcF$ there exists a unique system of PDEs of the form 
\begin{gather}\label{eq:Dubrovin--Zhang system}
\frac{\d w^\alpha}{\d t^\beta_b}=P^\alpha_{\beta,b},\quad 1\le\alpha,\beta\le N,\quad b\ge 0,
\end{gather}
where $w^1,\ldots,w^N\in\mbC[[t^*_*,\eps]]$, $P^\alpha_{\beta,b}$ are differential polynomials in $w^1,\ldots,w^N$, i.e., $P^\alpha_{\beta,b}$ are formal power series in $\eps$ with the coefficients that a polynomials in $w^\gamma_x, w^\gamma_{xx}, \ldots$ (we identify $x=t^1_0$) whose coefficients are formal power series in $w^\gamma$, such that a unique solution of the system~\eqref{eq:Dubrovin--Zhang system} specified by the condition $w^\alpha|_{t^\beta_b=\delta^{\beta,1}\delta_{b,0}x}=\delta^{\alpha,1}x$ is given by $w^\alpha=\eta^{\alpha\mu}\frac{\d^2\mcF}{\d t^\mu_0\d t^1_0}$. This system of PDEs (if it exists) is called the {\it Dubrovin--Zhang hierarchy} or the {\it hierarchy of topological type}. The conjecture is proved at the approximation up to $\eps^2$~\cite{DZ98} and in the case when the Dubrovin--Frobenius manifold associated to the total descendent potential is semisimple~\cite{BPS12a,BPS12b}. The Dubrovin--Zhang hierarchy corresponding to the Witten potential $\mcF^W$ is the Korteweg--de Vries (KdV) hierarchy
\begin{align*}
\frac{\d w}{\d t_1}&=ww_x+\frac{\eps^2}{12}w_{xxx},\\
\frac{\d w}{\d t_2}&=\frac{w^2w_x}{2}+\eps^2\left(\frac{ww_{xxx}}{12}+\frac{w_xw_{xx}}{6}\right)+\eps^4\frac{w_{xxxxx}}{240},\\
&\vdots\notag
\end{align*}
This statement is equivalent to Witten's conjecture~\cite{Wit91}, proved by Kontsevich~\cite{Kon92}.

\medskip

A more recent and less developed field of research is the study of the intersection theory on various moduli spaces of Riemann surfaces with boundary. Such a moduli space always comes with an associated moduli space of closed Riemann surfaces, and, thus, there is the corresponding total descendent potential $\mcF(t^*_*,\eps)=\sum_{g\ge 0}\eps^{2g}\mcF_g(t^*_*)$ of some rank~$N$. There is a large class of examples \cite{PST14,BCT18,ST19,Che18,CZ18,CZ19,Zin20} where the intersection numbers on the corresponding moduli space of Riemann surfaces with boundary of genus~$0$ are described by a formal power series $\mcF^o_0(t^*_*,s_*)\in\mbC[[t^*_*,s_*]]$ depending on an additional sequence of formal variable $s_a$, $a\ge 0$, and satisfying the relations
\begin{align}
&\frac{\d\mcF^o_0}{\d t^1_0}=\sum_{a\ge 0}t^\alpha_{a+1}\frac{\d\mcF^o_0}{\d t^\alpha_a}+\sum_{a\ge 0}s_{a+1}\frac{\d\mcF^o_0}{\d s_a}+s_0,\label{eq:open string in genus 0,1}\\
&d\left(\frac{\d\mcF^o_0}{\d t^\alpha_{a+1}}\right)=\frac{\d^2\mcF_0}{\d t^\alpha_a\d t^\mu_0}\eta^{\mu\nu}d\left(\frac{\d \mcF^o_0}{\d t^\nu_0}\right)+\frac{\d\mcF^o_0}{\d t^\alpha_a}d\left(\frac{\d\mcF^o_0}{\d s_0}\right),&& 1\le\alpha\le N,&& a\ge 0,\label{eq:open TRR-0,t}\\
&d\left(\frac{\d\mcF^o_0}{\d s_{a+1}}\right)=\frac{\d\mcF^o_0}{\d s_a}d\left(\frac{\d\mcF^o_0}{\d s_0}\right),&& && a\ge 0.\label{eq:open TRR-0,s}
\end{align}
Equation~\eqref{eq:open string in genus 0,1} is called the {\it open string equation}. Equations~\eqref{eq:open TRR-0,t} and~\eqref{eq:open TRR-0,s} are called the {\it open topological recursion relations in genus $0$}. The function~$\mcF^o_0$ is called the {\it open descendent potential in genus~$0$}.

\begin{remark}
The system of PDEs~\eqref{eq:open string in genus 0,1}--\eqref{eq:open TRR-0,s} implies that the function $\left.\mcF^o_0\right|_{t^*_{\ge 1}=s_{\ge 1}=0}$ satisfies the open WDVV equations (see \cite[Section~4]{Bur20}), which actually appear in some of the papers mentioned above. However, in~\cite{BB19} the authors presented a construction of an open descendent potential starting from an arbitrary solution of the open WDVV equations.
\end{remark}

Regarding higher genera, much less is known. However, conjecturally, the intersection theory on moduli spaces of Riemann surfaces with boundary of genus $1$ is controlled by formal power series $\mcF^o_1(t^*_*,s_*)\in\mbC[[t^*_*,s_*]]$ satisfying the relations
\begin{align*}
\frac{\d\mcF_1^o}{\d t^\alpha_{a+1}}=&\frac{\d^2\mcF_0}{\d t^\alpha_a\d t^\mu_0}\eta^{\mu\nu}\frac{\d\mcF^o_1}{\d t^\nu_0}+\frac{\d\mcF^o_0}{\d t^\alpha_a}\frac{\d\mcF^o_1}{\d s_0}+\frac{1}{2}\frac{\d^2\mcF_0^o}{\d t^\alpha_a\d s_0},&& 1\le\alpha\le N,&& a\ge 0,\\
\frac{\d\mcF_1^o}{\d s_{a+1}}=&\frac{\d\mcF^o_0}{\d s_a}\frac{\d\mcF^o_1}{\d s_0}+\frac{1}{2}\frac{\d^2\mcF_0^o}{\d s_a\d s_0},&& && a\ge 0,
\end{align*}
called the {\it open topological recursion relations in genus~$1$}. In the case of the intersection theory on the moduli spaces of Riemann surfaces with boundary of genus~$g$ with~$k$ boundary marked points and~$l$ internal marked points~$\oM_{g,k,l}$, these relations were conjectured by the authors of~\cite{PST14} and proved in~\cite[Section~6.2.3]{BCT18} (a proof by other methods is obtained by J.~P.~Solomon and R.~J.~Tessler in a work in preparation). An evidence that the open topological recursion relations in genus $1$ hold for the open $r$-spin theory is also given in~\cite[Section~6.2.3]{BCT18}. 

\medskip

An analog of the theory of Dubrovin--Zhang hierarchies for solutions of the system~\eqref{eq:open string in genus 0,1}--\eqref{eq:open TRR-0,s} was developed in~\cite{BB19}. Regarding higher genera, a very promising direction was opened by the series of papers~\cite{PST14,Tes15,Bur15,Bur16,BT17} (see also~\cite{ABT17}), where the authors studied the intersection numbers on the moduli spaces of Riemann surfaces with boundary of genus~$g$ with~$k$ boundary marked points and~$l$ internal marked points~$\oM_{g,k,l}$.
The main result of these works is the proof~\cite{BT17} of the Pandharipande--Solomon--Tessler conjecture~\cite{PST14} saying that the generating series 
$$
\mcF^{o,PST}(t_*,s_*,\eps)=\sum_{g\ge 0}\eps^g\mcF^{o,PST}_g(t_*,s_*)\in\mbC[[t_*,s_*,\eps]]
$$
of the intersection numbers satisfies the following system of PDEs:
\begin{align}
&\frac{\d}{\d t_p}\exp(\eps^{-1}\mcF^{o,PST})=\frac{\eps^{-1}}{(2p+1)!!}\left(L^{p+\frac{1}{2}}\right)_+\exp(\eps^{-1}\mcF^{o,PST}),&& p\ge 0,\label{eq:Lax for PST,1}\\
&\frac{\d}{\d s_p}\exp(\eps^{-1}\mcF^{o,PST})=\frac{\eps^{-1}}{2^{p+1}(p+1)!}L^{p+1}\exp(\eps^{-1}\mcF^{o,PST}),&& p\ge 0,\label{eq:Lax for PST,2}
\end{align}
where $L=(\eps\d_x)^2+2w$ is the Lax operator for the KdV hierarchy, and $w=\frac{\d^2\mcF^W}{\d t_0^2}$. 
\begin{remark}
To be precise, we have presented a version of the Pandharipande--Solomon--Tessler conjecture, which is slightly different from the original one in two aspects. First of all, in~\cite{PST14} the authors considered a function $\tmcF^{o,PST}$ related to our function $\mcF^{o,PST}$ by $\tmcF^{o,PST}=\left.\mcF^{o,PST}\right|_{s_{\ge 1}=0}$. The function $\mcF^{o,PST}$ can be reconstructed from the function $\tmcF^{o,PST}$ using the system of PDEs 
$$
\frac{\d}{\d s_p}\exp(\eps^{-1}\mcF^{o,PST})=\frac{\eps^p}{(p+1)!}\frac{\d^{p+1}}{\d s_0^{p+1}}\exp(\eps^{-1}\mcF^{o,PST}),\quad p\ge 1.
$$ 
Second, the system of PDEs from~\cite[Conjecture~2]{PST14} determining the function $\tmcF^{o,PST}$ does not have the form of a system of evolutionary PDEs with one spatial variable. The fact that the presented version of the Pandharipande--Solomon--Tessler conjecture is equivalent to the original one was observed in~\cite{Bur16}. 
\end{remark}

In this paper we study solutions of the open topological recursion relations in genus $1$. First, we find an analog of formula~\eqref{eq:formula for the closed genus 1 potential}. Then, using this formula, we construct a system of linear PDEs of the form similar to~\eqref{eq:Lax for PST,1} and~\eqref{eq:Lax for PST,2} such that the function $\exp(\mcF^o_0+\eps\mcF^o_1)$ satisfies it at the approximation up to $\eps$. An expectation in higher genera and a relation with a Lax description of the Dubrovin--Zhang hierarchies are also discussed.

\subsection*{Acknowledgements}

O.~B. is supported by Becas CONACYT para estudios de Doctorado en el extranjero awarded by the Mexican government, Ref: 2020-000000-01EXTF-00096. The work of A.~B. is funded within the framework of the HSE University Basic Research Program and the Russian Academic Excellence Project '5-100'.

\medskip

We are grateful to Oleg Chalykh for valuable remarks about the preliminary version of the paper.


\section{Closed and open descendent potentials in genus $0$}

In this section we recall the definitions of closed and open descendent potentials in genus~$0$ and the construction of associated to them systems of PDEs.

\subsection{Differential polynomials}

Consider formal variables $v^\alpha_i$, $\alpha=1,\ldots,N$, $i=0,1,\ldots$. Following~\cite{DZ01} (see also~\cite{Ros17}) we define the ring of {\it differential polynomials} $\cA_{v^1,\ldots,v^N}$ in the variables $v^1,\ldots,v^N$ as the ring of polynomials in the variables~$v^\alpha_i$, $i>0$, with coefficients in the ring of formal power series in the variables $v^\alpha:= v^\alpha_0$:
$$
\cA_{v^1,\ldots,v^N}:=\mbC[[v^*]][v^*_{\ge 1}].
$$

\begin{remark} It is useful to think of the variables $v^\alpha=v^\alpha_0$ as the components $v^\alpha(x)$ of a formal loop $v\colon S^1\to\mbC^N$ in the standard basis of $\mbC^N$. Then the variables $v^\alpha_1:= v^\alpha_x, v^\alpha_2:= v^\alpha_{xx},\ldots$ are the components of the iterated $x$-derivatives of the formal loop.
\end{remark}

The {\it standard gradation} on $\cA_{v^1,\ldots,v^N}$, which we denote by $\deg$, is introduced by $\deg v^\alpha_i:= i$. The homogeneous component of $\cA_{v^1,\ldots,v^N}$ of standard degree $d$ is denoted by $\cA_{v^1,\ldots,v^N;d}$. Introduce an operator $\d_x\colon\cA_{v^1,\ldots,v^N}\to\cA_{v^1,\ldots,v^N}$ by 
$$
\partial_x := \sum_{i\geq 0} v^\alpha_{i+1}\frac{\d}{\d v^\alpha_i}.
$$
It increases the standard degree by $1$. 

\medskip

Consider the extension $\hcA_{v^1,\ldots,v^N}:= \cA_{v^1,\ldots,v^N}[[\eps]]$ of the space $\cA_{v^1,\ldots,v^N}$ with a new variable~$\eps$ of standard degree $\deg\eps:= -1$. Let $\hcA_{v^1,\ldots,v^N;d}$ denote the subspace of degree~$d$ of $\hcA$. Abusing the terminology we still call elements of the space $\hcA_{v^1,\ldots,v^N}$ \emph{differential polynomials}.

\subsection{Closed descendent potentials in genus $0$}

Let us fix $N\ge 1$, an $N\times N$ symmetric nondegenerate complex matrix $\eta=(\eta_{\alpha\beta})$, and an $N$-tuple of complex numbers $(A^1,\ldots,A^N)$, not all equal to zero. We will use the notation
$$
\frac{\d}{\d t^\un_a}:=A^\alpha\frac{\d}{\d t^\alpha_a},\quad a\ge 0.
$$

\begin{definition}
A formal power series $\mcF_0\in\mbC[[t^*_*]]$ is called a {\it descendent potential in genus $0$} if it satisfies the following system of PDEs:
\begin{align}
&\sum_{a\ge 0} t^\alpha_{a+1}\frac{\d\mcF_0}{\d t^\alpha_a}-\frac{\d\mcF_0}{\d t^\un_0}=-\frac{1}{2}\eta_{\alpha\beta}t^\alpha_0 t^\beta_0,&&&&\label{eq:string for desc-0}\\
&\sum_{a\ge 0} t^\alpha_a\frac{\d\mcF_0}{\d t^\alpha_a}-\frac{\d\mcF_0}{\d t^\un_1}=2\mcF_0,&&&&\label{eq:dilaton for desc-0}\\
&\frac{\d^3\mcF_0}{\d t^\alpha_{a+1}\d t^\beta_b\d t^\gamma_c}=\frac{\d^2\mcF_0}{\d t^\alpha_a\d t^\mu_0}\eta^{\mu\nu}\frac{\d^3\mcF_0}{\d t^\nu_0\d t^\beta_b\d t^\gamma_c},&& 1\le\alpha,\beta,\gamma\le N,&& a,b,c\ge 0,\label{eq:TRR for desc-0}\\
&\frac{\d^2\mcF_0}{\d t^\alpha_{a+1}\d t^\beta_b}+\frac{\d^2\mcF_0}{\d t^\alpha_a\d t^\beta_{b+1}}=\frac{\d^2\mcF_0}{\d t^\alpha_a\d t^\mu_0}\eta^{\mu\nu}\frac{\d^2\mcF_0}{\d t^\nu_0\d t^\beta_b},&& 1\le\alpha,\beta\le N,&& a,b\ge 0.\notag
\end{align}
We will sometimes call a descendent potential in genus~$0$ a {\it closed} descendent potential in genus~$0$ in order to distinguish it from an open analog that we will discuss below.
\end{definition}
\begin{remark}
Doing a linear change of variables, one can make $\frac{\d}{\d t^\un_a}=\frac{\d}{\d t^1_a}\Leftrightarrow A^\alpha=\delta^{\alpha,1}$ in Equations~\eqref{eq:string for desc-0} and~\eqref{eq:dilaton for desc-0}. That is why authors often assume that $A^\alpha=\delta^{\alpha,1}$.
\end{remark}
\begin{remark}
For any total descendent potential $\mcF=\sum_{g\ge 0}\eps^{2g}\mcF_g$ the function $\mcF_0$ is a descendent potential in genus~$0$. However, describing precisely which descendent potentials in genus $0$ can be extended to total descendent potentials is an interesting open problem.
\end{remark}

Define differential polynomials $\Omega^{[0]}_{\alpha,a;\beta,b}\in\cA_{v^1,\ldots,v^N;0}$, $1\le\alpha,\beta\le N$, $a,b\ge 0$, by
$$
\Omega^{[0]}_{\alpha,a;\beta,b}:=\left.\frac{\d^2\mcF_0}{\d t^\alpha_a\d t^\beta_b}\right|_{t^\gamma_c=\delta_{c,0}v^\gamma},
$$
and let
$$
(v^\top)^\alpha:=\eta^{\alpha\mu}\frac{\d^2\mcF_0}{\d t^\mu_0\d t^\un_0}\in\mbC[[t^*_*]],\quad 1\le\alpha\le N.
$$
Then we have (see e.g.~\cite[Proposition~3]{BPS12b})
\begin{gather}\label{eq:property of the closed two-point functions}
\frac{\d^2\mcF_0}{\d t^\alpha_a\d t^\beta_b}=\left.\Omega^{[0]}_{\alpha,a;\beta,b}\right|_{v^\gamma=(v^\top)^\gamma}.
\end{gather}
This implies that the $N$-tuple of functions $\left.(v^\top)^\alpha\right|_{t^\gamma_0\mapsto t^\gamma_0+A^\gamma x}$ is a solution of the following system of PDEs:
$$
\frac{\d v^\alpha}{\d t^\beta_b}=\eta^{\alpha\mu}\d_x\Omega^{[0]}_{\mu,0;\beta,b},\quad 1\le\alpha,\beta\le N,\quad b\ge 0,
$$
which is called the {\it principal hierarchy} associated to the potential~$\mcF_0$.

\subsection{Open descendent potentials in genus $0$}

Let us fix a closed descendent potential in genus $0$ $\mcF_0$.

\begin{definition}\label{definition:open descendent potential in genus 0}
An {\it open descendent potential in genus $0$} $\mcF^o_0\in\mbC[[t^*_*,s_*]]$ is a solution of the following system of PDEs:
\begin{align}
&\sum_{b\ge 0}t^\beta_{b+1}\frac{\d\mcF^o_0}{\d t^\beta_b}+\sum_{a\ge 0}s_{a+1}\frac{\d\mcF^o_0}{\d s_a}-\frac{\d\mcF^o_0}{\d t^\un_0}=-s_0,\label{eq:open string in genus 0,2}\\
&\sum_{b\ge 0}t^\beta_b\frac{\d\mcF^o_0}{\d t^\beta_b}+\sum_{a\ge 0}s_a\frac{\d\mcF^o_0}{\d s_a}-\frac{\d\mcF^o_0}{\d t^\un_1}=\mcF^o_0,\notag\\
&d\left(\frac{\d\mcF^o_0}{\d t^\alpha_{p+1}}\right)=\frac{\d^2\mcF_0}{\d t^\alpha_p\d t^\mu_0}\eta^{\mu\nu}d\left(\frac{\d \mcF^o_0}{\d t^\nu_0}\right)+\frac{\d\mcF^o_0}{\d t^\alpha_p}d\left(\frac{\d\mcF^o_0}{\d s_0}\right),\label{eq:open TRR for desc-0,t}\\
&d\left(\frac{\d\mcF^o_0}{\d s_{p+1}}\right)=\frac{\d\mcF^o_0}{\d s_p}d\left(\frac{\d\mcF^o_0}{\d s_0}\right).\label{eq:open TRR for desc-0,s}
\end{align}
\end{definition}

\medskip

Consider a new formal variable $\phi$. Similarly to the differential polynomials $\Omega^{[0]}_{\alpha,a;\beta,b}$, let us introduce differential polynomials $\Gamma^{[0]}_{\alpha,a},\Delta^{[0]}_a\in\cA_{v^1,\ldots,v^N,\phi;0}$, $1\le\alpha\le N$, $a\ge 0$, by
$$
\Gamma^{[0]}_{\alpha,a}:=\left.\frac{\d\mcF^o_0}{\d t^\alpha_a}\right|_{\substack{t^\gamma_c=\delta_{c,0}v^\gamma\\s_c=\delta_{c,0}\phi}},\qquad \Delta^{[0]}_a:=\left.\frac{\d\mcF^o_0}{\d s_a}\right|_{\substack{t^\gamma_c=\delta_{c,0}v^\gamma\\s_c=\delta_{c,0}\phi}},
$$ 
and let
$$
\phi^\top:=\frac{\d\mcF^o_0}{\d t^\un_0}\in\mbC[[t^*_*,s_*]].
$$
We have the following properties, analogous to the property~\eqref{eq:property of the closed two-point functions} (\cite[Section~4.4]{BB19}, \cite[Proposition~2.2]{ABLR20}):
\begin{gather*}
\frac{\d\mcF^o_0}{\d t^\alpha_a}=\left.\Gamma^{[0]}_{\alpha,a}\right|_{\substack{v^\gamma=(v^\top)^\gamma\\\phi=\phi^\top}},\qquad\frac{\d\mcF^o_0}{\d s_a}=\left.\Delta^{[0]}_{a}\right|_{\substack{v^\gamma=(v^\top)^\gamma\\\phi=\phi^\top}}.
\end{gather*}
This implies that the $(N+1)$-tuple of functions $\left.\left((v^\top)^1,\ldots,(v^\top)^N,\phi^\top\right)\right|_{t^\gamma_0\mapsto t^\gamma_0+A^\gamma x}$ satisfies the following system of PDEs:
\begin{align*}
&\frac{\d v^\alpha}{\d t^\beta_b}=\d_x\eta^{\alpha\mu}\Omega^{[0]}_{\mu,0;\beta,b},&& \frac{\d v^\alpha}{\d s_b}=0,\\
&\frac{\d\phi}{\d t^\beta_b}=\d_x\Gamma^{[0]}_{\beta,b},&& \frac{\d\phi}{\d s_b}=\d_x\Delta^{[0]}_b,
\end{align*}
which we call the {\it extended principal hierarchy} associated to the pair of potentials $(\mcF_0,\mcF^o_0)$.


\section{Open descendent potentials in genus $1$}

Here we introduce the notion of an open descendent potential in genus $1$ and prove two main results of our paper: Theorems~\ref{theorem:explicit formula in the open genus 1} and~\ref{theorem:linear PDEs up to genus 1}.

\subsection{Open descendent potentials in genus $1$}

Let us fix a pair $(\mcF_0,\mcF^o_0)$ of closed and open potentials in genus $0$.

\begin{definition}\label{definition:open descendent potential in genus 1}
An {\it open descendent potential in genus $1$} $\mcF^o_1\in\mbC[[t^*_*,s_*]]$ is a solution of the following system of PDEs:
\begin{align}
\frac{\d\mcF_1^o}{\d t^\alpha_{a+1}}=&\frac{\d^2\mcF_0}{\d t^\alpha_a\d t^\mu_0}\eta^{\mu\nu}\frac{\d\mcF^o_1}{\d t^\nu_0}+\frac{\d\mcF^o_0}{\d t^\alpha_a}\frac{\d\mcF^o_1}{\d s_0}+\frac{1}{2}\frac{\d^2\mcF_0^o}{\d t^\alpha_a\d s_0},&& 1\le\alpha\le N,&& a\ge 0,\label{eq:open TRR-1,1}\\
\frac{\d\mcF_1^o}{\d s_{a+1}}=&\frac{\d\mcF^o_0}{\d s_a}\frac{\d\mcF^o_1}{\d s_0}+\frac{1}{2}\frac{\d^2\mcF_0^o}{\d s_a\d s_0},&& && a\ge 0.\label{eq:open TRR-1,2}
\end{align}
\end{definition}

\medskip

Consider an open descendent potential in genus $1$ $\mcF^o_1$. Define a formal power series $G^o\in\mbC[[v^*,\phi]]$ by
$$
G^o:=\left.\mcF^o_1\right|_{\substack{t^\alpha_a=\delta_{a,0}v^\alpha\\s_a=\delta_{a,0}\phi}}.
$$

\begin{theorem}\label{theorem:explicit formula in the open genus 1}
We have
\begin{gather}\label{eq:formula for the open genus 1 part}
\mcF^o_1=\frac{1}{2}\log\frac{\d^2\mcF^o_0}{\d t^\un_0\d s_0}+\left.G^o\right|_{\substack{v^\gamma=(v^\top)^\gamma\\\phi=\phi^\top}}.
\end{gather}
\end{theorem}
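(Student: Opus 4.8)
The plan is to mimic the classical derivation of the Dijkgraaf--Witten formula~\eqref{eq:formula for the closed genus 1 potential} from the topological recursion relations in genus~$1$, adapting it to the open setting. Write $\mcF^o_1 = \frac{1}{2}\log\frac{\d^2\mcF^o_0}{\d t^\un_0\d s_0} + R$ and aim to show that $R$ depends only on the variables $(v^\top)^1,\ldots,(v^\top)^N,\phi^\top$, i.e.\ that $\frac{\d R}{\d t^\alpha_a}$ and $\frac{\d R}{\d s_a}$ can be expressed through $\frac{\d (v^\top)^\gamma}{\d t^\alpha_a}$ etc., which (together with the value $R|_{t^*_{\ge1}=s_{\ge1}=0} = G^o$ forced by restriction) yields~\eqref{eq:formula for the open genus 1 part}. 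Concretely, I would first compute $\frac{\d}{\d t^\alpha_{a+1}}\log\frac{\d^2\mcF^o_0}{\d t^\un_0\d s_0}$ and $\frac{\d}{\d s_{a+1}}\log\frac{\d^2\mcF^o_0}{\d t^\un_0\d s_0}$ by applying $\d_{t^\alpha_{a+1}}$ (resp.\ $\d_{s_{a+1}}$) to $\frac{\d^2\mcF^o_0}{\d t^\un_0\d s_0}$ and using the open topological recursion relations in genus~$0$, namely~\eqref{eq:open TRR for desc-0,t} and~\eqref{eq:open TRR for desc-0,s} differentiated appropriately, together with the closed relations~\eqref{eq:TRR for desc-0}, to rewrite everything in terms of first-descendent ($a=0$) data. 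The key identities to extract are the ``evolution'' formulas for $\frac{\d^2\mcF^o_0}{\d t^\un_0\d s_0}$ under the extended principal hierarchy flows, which follow from the fact established in the excerpt that $((v^\top)^\gamma,\phi^\top)|_{t^\gamma_0\mapsto t^\gamma_0+A^\gamma x}$ solves that hierarchy.

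Next I would plug these into~\eqref{eq:open TRR-1,1} and~\eqref{eq:open TRR-1,2}. After subtracting $\frac{1}{2}\d_{t^\alpha_{a+1}}\log\frac{\d^2\mcF^o_0}{\d t^\un_0\d s_0}$ from both sides of~\eqref{eq:open TRR-1,1}, the goal is that the resulting equation for $R$ takes the ``chain-rule'' form
$$
\frac{\d R}{\d t^\alpha_{a+1}} = \frac{\d^2\mcF_0}{\d t^\alpha_a\d t^\mu_0}\eta^{\mu\nu}\frac{\d R}{\d t^\nu_0} + \frac{\d\mcF^o_0}{\d t^\alpha_a}\frac{\d R}{\d s_0},
$$
and similarly $\frac{\d R}{\d s_{a+1}} = \frac{\d\mcF^o_0}{\d s_a}\frac{\d R}{\d s_0}$; these are exactly the statements that $R$ is constant along the directions that raise the descendent index, i.e.\ that $R$ is a function of $(v^\top,\phi^\top)$ alone. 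For this cancellation to occur, the two inhomogeneous pieces coming from the $\frac{1}{2}\log$-term must precisely absorb the terms $\frac{\d\mcF^o_0}{\d t^\alpha_a}\cdot\frac{1}{2}\d_{s_0}\log(\cdots)$ and $\frac{1}{2}\frac{\d^2\mcF^o_0}{\d t^\alpha_a\d s_0}$ in~\eqref{eq:open TRR-1,1} (and analogously in~\eqref{eq:open TRR-1,2}); verifying this is the heart of the computation and amounts to proving an identity of the schematic shape
$$
\frac{1}{2}\,\d_{t^\alpha_{a+1}}\log\frac{\d^2\mcF^o_0}{\d t^\un_0\d s_0} = \frac{\d^2\mcF_0}{\d t^\alpha_a\d t^\mu_0}\eta^{\mu\nu}\cdot\frac{1}{2}\,\d_{t^\nu_0}\log\frac{\d^2\mcF^o_0}{\d t^\un_0\d s_0} + \frac{\d\mcF^o_0}{\d t^\alpha_a}\cdot\frac{1}{2}\,\d_{s_0}\log\frac{\d^2\mcF^o_0}{\d t^\un_0\d s_0} + \frac{1}{2}\frac{\d^2\mcF^o_0}{\d t^\alpha_a\d s_0},
$$
together with its $s$-analogue. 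I would prove these by a direct manipulation: differentiate $\frac{\d^2\mcF^o_0}{\d t^\un_0\d s_0}$, invoke~\eqref{eq:open TRR for desc-0,t}--\eqref{eq:open TRR for desc-0,s} and~\eqref{eq:TRR for desc-0}, and use the string/dilaton-type normalizations and the symmetry of mixed partials to collect terms.

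Finally, once $R = R((v^\top)^\gamma,\phi^\top)$ is established, restricting~\eqref{eq:formula for the open genus 1 part} to $t^*_{\ge1}=s_{\ge1}=0$ forces $R = G^o$ after the substitution $v^\gamma=(v^\top)^\gamma$, $\phi=\phi^\top$, since then $(v^\top)^\gamma$ and $\phi^\top$ restrict to $v^\gamma$ and $\phi$ respectively (by the string equations~\eqref{eq:string for desc-0} and~\eqref{eq:open string in genus 0,2}, which give $(v^\top)^\gamma|_{t^*_{\ge1}=0}=v^\gamma$ up to the standard shift and $\phi^\top|_{t^*_{\ge1}=s_{\ge1}=0}=\phi$). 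The main obstacle I anticipate is bookkeeping: correctly handling the distinguished direction $\d_{t^\un_0}=A^\alpha\d_{t^\alpha_0}$ inside the logarithm, and making sure the genus-$0$ open TRR are applied in exactly the right differentiated form so that the inhomogeneous $\frac{1}{2}\frac{\d^2\mcF^o_0}{\d t^\alpha_a\d s_0}$ and $\frac{1}{2}\frac{\d^2\mcF^o_0}{\d s_a\d s_0}$ terms are reproduced with the correct coefficient $\tfrac12$ rather than, say, $\tfrac14$ or with an extra $\d_x$. A secondary subtlety is ensuring that the argument of the logarithm, $\frac{\d^2\mcF^o_0}{\d t^\un_0\d s_0}$, is invertible as a formal power series (its constant term must be nonzero), which should follow from the leading behaviour dictated by the open string equation~\eqref{eq:open string in genus 0,2}.
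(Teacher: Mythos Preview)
Your proposal is correct and follows essentially the same route as the paper. The key identity you isolate---that $\tfrac12\log\frac{\d^2\mcF^o_0}{\d t^\un_0\d s_0}$ satisfies the open genus-$1$ TRR with the inhomogeneous term $\tfrac12\frac{\d^2\mcF^o_0}{\d t^\alpha_a\d s_0}$ (and its $s$-analogue)---is exactly what the paper proves, by writing the left side as $\d_{s_0}$ of the $t^\un_0$-component of~\eqref{eq:open TRR for desc-0,t} plus the leftover product term; your framing in terms of a remainder $R$ satisfying the homogeneous system and hence depending only on $((v^\top)^\gamma,\phi^\top)$ is equivalent to the paper's uniqueness argument for the PDE system defining~$\mcF^o_1$.
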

\begin{proof}
Note that Equation~\eqref{eq:open string in genus 0,2} implies that
$$
\left.\frac{\d^2\mcF^o_0}{\d t^\un_0\d s_0}\right|_{t^*_{\ge 1}=s_{\ge 1}=0}=1.
$$
Therefore, the logarithm $\log\frac{\d^2\mcF^o_0}{\d t^\un_0\d s_0}$ is a well-defined formal power series in the variables~$t^*_*$ and~$s_*$. Also, Equations~\eqref{eq:string for desc-0} and~\eqref{eq:open string in genus 0,2} imply that
$$
\left.(v^\top)^\alpha\right|_{t^*_{\ge 1}=0}=t^\alpha_0,\qquad \left.\phi^\top\right|_{t^*_{\ge 1}=s_{\ge 1}=0}=s_0.
$$
Therefore, Equation~\eqref{eq:formula for the open genus 1 part} is true when $t^*_{\ge 1}=s_{\ge 1}=0$. 

\medskip

Using the linear differential operators
\begin{align*}
&P^1_{\alpha,a}:=\frac{\d}{\d t^\alpha_{a+1}}-\frac{\d^2\mcF_0}{\d t^\alpha_a\d t^\mu_0}\eta^{\mu\nu}\frac{\d}{\d t^\nu_0}-\frac{\d\mcF^o_0}{\d t^\alpha_a}\frac{\d}{\d s_0},&&1\le\alpha\le N,&& a\ge 0,\\
&P^2_a:=\frac{\d}{\d s_{a+1}}-\frac{\d\mcF^o_0}{\d s_a}\frac{\d}{\d s_0},&& && a\ge 0,
\end{align*}
Equations~\eqref{eq:open TRR-1,1} and~\eqref{eq:open TRR-1,2} can be equivalently written as
\begin{align*}
&P^1_{\alpha,a}\mcF^o_1=\frac{1}{2}\frac{\d^2\mcF^o_0}{\d t^\alpha_a\d s_0}, && 1\le\alpha\le N, && a\ge 0,\\
&P^2_a\mcF^o_1=\frac{1}{2}\frac{\d^2\mcF^o_0}{\d s_a\d s},&& && a\ge 0.
\end{align*}
This system of PDEs uniquely determines the function $\mcF^o_1$ starting from the initial condition~$\left.\mcF^o_1\right|_{t^*_{\ge 1}=s_{\ge 1}=0}=G^o(t^1_0,\ldots,t^N_0,s_0)$. By Equations~\eqref{eq:TRR for desc-0},~\eqref{eq:open TRR for desc-0,t}, and~\eqref{eq:open TRR for desc-0,s}, we have
$$
P^1_{\alpha,a}(v^\top)^\beta=P^1_{\alpha,a}\phi^\top=P^2_a(v^\top)^\beta=P^2_a\phi^\top=0.
$$
Therefore, it remains to check that
\begin{align}
&P^1_{\alpha,a}\log\frac{\d^2\mcF^o_0}{\d t^\un_0\d s_0}=\frac{\d^2\mcF^o_0}{\d t^\alpha_a\d s_0},\label{eq:proof of open genus 1 formula,1}\\
&P^2_a\log\frac{\d^2\mcF^o_0}{\d t^\un_0\d s_0}=\frac{\d^2\mcF^o_0}{\d s_a\d s_0}.\label{eq:proof of open genus 1 formula,2}
\end{align}

\medskip

To prove Equation~\eqref{eq:proof of open genus 1 formula,1}, we compute
\begin{align*}
P^1_{\alpha,a}\log\frac{\d^2\mcF^o_0}{\d t^\un_0\d s_0}=&\frac{1}{\frac{\d^2\mcF^o_0}{\d t^\un_0\d s_0}}\left(\frac{\d^3\mcF^o_0}{\d t^\alpha_{a+1}\d t^\un_0\d s_0}-\frac{\d^2\mcF_0}{\d t^\alpha_a\d t^\mu_0}\eta^{\mu\nu}\frac{\d^3\mcF^o_0}{\d t^\nu_0\d t^\un_0\d s_0}-\frac{\d\mcF^o_0}{\d t^\alpha_a}\frac{\d^3\mcF^o_0}{\d s_0\d t^\un_0\d s_0}\right)=\\
=&\frac{1}{\frac{\d^2\mcF^o_0}{\d t^\un_0\d s_0}}\left[\frac{\d}{\d s_0}\underline{\left(\frac{\d^2\mcF^o_0}{\d t^\alpha_{a+1}\d t^\un_0}-\frac{\d^2\mcF_0}{\d t^\alpha_a\d t^\mu_0}\eta^{\mu\nu}\frac{\d^2\mcF^o_0}{\d t^\nu_0\d t^\un_0}-\frac{\d\mcF^o_0}{\d t^\alpha_a}\frac{\d^2\mcF^o_0}{\d s_0\d t^\un_0}\right)}+\frac{\d^2\mcF_0^o}{\d t^\alpha_a\d s_0}\frac{\d^2\mcF_0^o}{\d t^\un_0\d s_0}\right]=\\
=&\frac{\d^2\mcF_0^o}{\d t^\alpha_a\d s_0},
\end{align*}
where the vanishing of the underlined expression follows from Equation~\eqref{eq:open TRR for desc-0,t}. The proof of Equation~\eqref{eq:proof of open genus 1 formula,2} is analogous. The theorem is proved.
\end{proof}

\subsection{Differential operators and PDEs}

Consider a differential operator $L$ of the form
$$
L=\sum_{i\ge 0}L_i(v^*_*,\eps)(\eps\d_x)^i,\quad L_i\in\hcA_{v^1,\ldots,v^N;0}.
$$
Let $f$ be a formal variable and consider the PDE
\begin{gather}\label{eq:L-f PDE,1}
\frac{\d}{\d t}\exp(\eps^{-1} f)=\eps^{-1}L\exp(\eps^{-1} f).
\end{gather}
Note that 
$$
\frac{(\eps\d_x)^i\exp(\eps^{-1}f)}{\exp(\eps^{-1}f)}=Q_i(f_*,\eps),\quad i\ge 0,
$$
where $Q_i\in\hcA_f$ can be recursively computed by the relation
$$
Q_i=
\begin{cases}
1,&\text{if $i=0$},\\
f_xQ_{i-1}+\eps\d_x Q_{i-1},&\text{if $i\ge 1$}.
\end{cases}
$$ 

\begin{remark}
Note that $Q_i$ does not depend on $f$ and is a polynomial in the derivatives $f_x,f_{xx},\ldots$ and $\eps$. Moreover, if we introduce a new formal variable~$\psi$ and substitute $f_{i+1}=\psi_i$, $i\ge 0$, then~$Q_i$ becomes a differential polynomial of degree~$0$. 
\end{remark}

We see that PDE~\eqref{eq:L-f PDE,1} is equivalent to the following PDE:
\begin{gather}\label{eq:L-f PDE,2}
\frac{\d f}{\d t}=\sum_{i\ge 0}L_i(v^*_*,\eps)Q_i(f_*,\eps).
\end{gather}

\medskip

Let us look at this PDE in more details at the approximation up to~$\eps$.
\begin{lemma}
We have $Q_i=f_x^i+\eps\frac{i(i-1)}{2}f_x^{i-2}f_{xx}+O(\eps^2)$.
\end{lemma}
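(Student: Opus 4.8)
The plan is to prove the closed formula $Q_i=f_x^i+\eps\frac{i(i-1)}{2}f_x^{i-2}f_{xx}+O(\eps^2)$ by induction on $i$, using the recursion $Q_i=f_xQ_{i-1}+\eps\d_x Q_{i-1}$. First I would dispose of the base cases $i=0$ and $i=1$ directly: $Q_0=1$ and $Q_1=f_x$, both of which match the claimed expression (with the convention that the $\eps$-coefficient vanishes, consistently since $\frac{i(i-1)}{2}=0$ for $i\in\{0,1\}$, where one reads $f_x^{-2}f_{xx}$ and $f_x^{-1}f_{xx}$ as formally multiplied by $0$).

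For the inductive step, suppose $Q_{i-1}=f_x^{i-1}+\eps\frac{(i-1)(i-2)}{2}f_x^{i-3}f_{xx}+O(\eps^2)$. Then I would substitute into the recursion. The term $f_xQ_{i-1}$ contributes $f_x^i+\eps\frac{(i-1)(i-2)}{2}f_x^{i-2}f_{xx}+O(\eps^2)$. For the term $\eps\d_x Q_{i-1}$, since everything beyond the leading $f_x^{i-1}$ already carries a factor of $\eps$, only $\eps\d_x(f_x^{i-1})=\eps(i-1)f_x^{i-2}f_{xx}$ survives modulo $O(\eps^2)$. Adding the two $\eps$-coefficients gives $\frac{(i-1)(i-2)}{2}+(i-1)=\frac{(i-1)(i-2)+2(i-1)}{2}=\frac{(i-1)i}{2}=\frac{i(i-1)}{2}$, which is exactly the claimed coefficient. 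This completes the induction.

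There is essentially no hard obstacle here — the argument is a routine one-line induction — but the one point that requires a moment of care is the bookkeeping of the $O(\eps^2)$ remainder under the action of $\eps\d_x$: one must note that $\d_x$ preserves the property of being $O(1)$ (it does not introduce negative powers of $\eps$), so that $\eps\d_x\big(O(\eps^2)\big)\subseteq O(\eps^3)\subseteq O(\eps^2)$, and likewise $\eps\d_x$ applied to the $\eps$-order term of $Q_{i-1}$ lands in $O(\eps^2)$ and can be discarded. I would state this explicitly to make the truncation rigorous. A cleaner alternative, if one wants to avoid $O(\eps)$-notation bookkeeping, is to work in the quotient ring $\hcA_f/(\eps^2)$: write $Q_i=q_i^{(0)}+\eps q_i^{(1)}$ there, split the recursion into its $\eps^0$ and $\eps^1$ components, solve $q_i^{(0)}=f_x^i$ trivially, and then observe that $q_i^{(1)}=f_xq_{i-1}^{(1)}+\d_x q_{i-1}^{(0)}=f_xq_{i-1}^{(1)}+(i-1)f_x^{i-2}f_{xx}$, whose solution with $q_1^{(1)}=0$ is $q_i^{(1)}=\big(\sum_{j=2}^{i}(j-1)\big)f_x^{i-2}f_{xx}=\frac{i(i-1)}{2}f_x^{i-2}f_{xx}$; this telescoping identity for $q_i^{(1)}$ makes the coefficient $\frac{i(i-1)}{2}$ transparent.
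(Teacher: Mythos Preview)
Your proof is correct and follows essentially the same approach as the paper: a straightforward induction on~$i$ using the recursion $Q_i=f_xQ_{i-1}+\eps\d_x Q_{i-1}$, with the same arithmetic $\frac{(i-1)(i-2)}{2}+(i-1)=\frac{i(i-1)}{2}$. Your additional remarks on the $O(\eps^2)$ bookkeeping and the quotient-ring reformulation are sound but not needed for this elementary computation.
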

\begin{proof}
The formula is clearly true for $i=0$. We proceed by induction:
\begin{align*}
Q_{i+1}=&f_x Q_i+\eps\d_x Q_i=f_x^{i+1}+\eps\left(f_x\frac{i(i-1)}{2}f_x^{i-2}f_{xx}+\d_x\left(f_x^i\right)\right)+O(\eps^2)=\\
=&f_x^{i+1}+\eps\frac{(i+1)i}{2}f_x^{i-1}f_{xx}+O(\eps^2).
\end{align*}
\end{proof}

Consider the expansion
$$
L_i(v^*_*,\eps)=\sum_{j\ge 0}L_i^{[j]}(v^*_*)\eps^j,\quad L_i^{[j]}\in\cA_{v^1,\ldots,v^N;j}.
$$
We see that Equation~\eqref{eq:L-f PDE,2} has the form
\begin{gather}\label{eq:linear PDE - first order approximation}
\frac{\d f}{\d t}=\sum_{i\ge 0}L_i^{[0]}f_x^i+\eps\sum_{i\ge 0}\left(L_i^{[1]}f_x^i+L_i^{[0]}\frac{i(i-1)}{2}f_x^{i-2}f_{xx}\right)+O(\eps^2).
\end{gather}

\subsection{A linear PDE for an open descendent potential up to genus $1$}

Define differential operators $L_{\alpha,a}^{\intt}$, $1\le\alpha\le N$, $a\ge 0$, and $L_a^{\boun}$, $a\ge 0$, by
\begin{align*}
&L_{\alpha,a}^{\intt}:=\sum_{i\ge 0}\left(L_{\alpha,a,i}^{\intt;[0]}+\eps L_{\alpha,a,i}^{\intt;[1]}\right)(\eps\d_x)^i,\\
&L_a^{\boun}:=\sum_{i\ge 0}\left(L_{a,i}^{\boun;[0]}+\eps L_{a,i}^{\boun;[1]}\right)(\eps\d_x)^i,
\end{align*}
where
\begin{align*}
&L_{\alpha,a,i}^{\intt;[0]}:=\Coef_{\phi^i}\Gamma^{[0]}_{\alpha,a}\in\cA_{v^1,\ldots,v^N;0},\\
&L_{\alpha,a,i}^{\intt;[1]}:=\Coef_{\phi^i}\left[\left(\frac{\d G^o}{\d\phi}\frac{\d\Gamma^{[0]}_{\alpha,a}}{\d v^\beta}-\frac{\d G^o}{\d v^\beta}\frac{\d\Gamma^{[0]}_{\alpha,a}}{\d\phi}+\frac{1}{2}\frac{\d^2\Gamma^{[0]}_{\alpha,a}}{\d v^\beta\d\phi}\right)v^\beta_x+\frac{\d G^o}{\d v^\beta}\eta^{\beta\gamma}\d_x\Omega^{[0]}_{\gamma,0;\alpha,a}\right]\in\cA_{v^1,\ldots,v^N;1},\\
&L_{a,i}^{\boun;[0]}:=\Coef_{\phi^i}\Delta^{[0]}_a\in\cA_{v^1,\ldots,v^N;0},\\
&L_{a,i}^{\boun;[1]}:=\Coef_{\phi^i}\left[\left(\frac{\d G^o}{\d\phi}\frac{\d\Delta^{[0]}_a}{\d v^\beta}-\frac{\d G^o}{\d v^\beta}\frac{\d\Delta^{[0]}_a}{\d\phi}+\frac{1}{2}\frac{\d^2\Delta^{[0]}_a}{\d v^\beta\d\phi}\right)v^\beta_x\right]\in\cA_{v^1,\ldots,v^N;1}.
\end{align*}

\begin{theorem}\label{theorem:linear PDEs up to genus 1}
The formal power series $v^\beta=\left.(v^\top)^\beta\right|_{t^\gamma_0\mapsto t^\gamma_0+A^\gamma x}$ and $f=\left.\left(\mcF^o_0+\eps\mcF^o_1\right)\right|_{t^\gamma_0\mapsto t^\gamma_0+A^\gamma x}$ satisfy the system of PDEs
\begin{align}
&\frac{\d}{\d t^\alpha_a}\exp(\eps^{-1} f)=\eps^{-1}L_{\alpha,a}^{\intt}\exp(\eps^{-1} f),&& 1\le\alpha\le N,&& a\ge 0,\label{eq:internal equation}\\
&\frac{\d}{\d s_a}\exp(\eps^{-1} f)=\eps^{-1}L_a^{\boun}\exp(\eps^{-1} f),&& && a\ge 0.\label{eq:boundary equation}
\end{align}
at the approximation up to $\eps$.
\end{theorem}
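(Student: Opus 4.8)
The plan is to verify \eqref{eq:internal equation} and \eqref{eq:boundary equation} in the first-order-in-$\eps$ form prepared in the previous subsection, and then to reduce the resulting identities, via the explicit formula of Theorem~\ref{theorem:explicit formula in the open genus 1} and the chain rule, to the genus-$0$ relations already recorded. By \eqref{eq:L-f PDE,1}--\eqref{eq:linear PDE - first order approximation}, PDE~\eqref{eq:internal equation} holds modulo $\eps^2$ if and only if
\begin{equation*}
\frac{\d f}{\d t^\alpha_a}=\sum_{i\ge 0}L_{\alpha,a,i}^{\intt;[0]}f_x^i+\eps\sum_{i\ge 0}\Bigl(L_{\alpha,a,i}^{\intt;[1]}f_x^i+L_{\alpha,a,i}^{\intt;[0]}\tfrac{i(i-1)}{2}f_x^{i-2}f_{xx}\Bigr)+O(\eps^2),
\end{equation*}
and likewise for \eqref{eq:boundary equation}. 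Here, after the substitution $t^\gamma_0\mapsto t^\gamma_0+A^\gamma x$, one has $\d_x=A^\gamma\frac{\d}{\d t^\gamma_0}=\frac{\d}{\d t^\un_0}$; the coefficients $L_{\alpha,a,i}^{\intt;[0]}$ and $L_{\alpha,a,i}^{\intt;[1]}$ are evaluated at $v^\gamma=(v^\top)^\gamma$, which carries no $\eps$; and $f=\mcF^o_0+\eps\mcF^o_1$, so that $f_x=\phi^\top+\eps\frac{\d\mcF^o_1}{\d t^\un_0}$ and, to the order required, $f_{xx}=\frac{\d^2\mcF^o_0}{\d t^\un_0\d t^\un_0}=\d_x\phi^\top$.

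The $\eps^0$-part holds trivially: since $L_{\alpha,a,i}^{\intt;[0]}=\Coef_{\phi^i}\Gamma^{[0]}_{\alpha,a}$, summation gives $\sum_{i\ge0}L_{\alpha,a,i}^{\intt;[0]}\big|_{v^\top}(\phi^\top)^i=\Gamma^{[0]}_{\alpha,a}\big|_{v^\top,\phi^\top}=\frac{\d\mcF^o_0}{\d t^\alpha_a}$. For the $\eps^1$-part I would compute $\frac{\d\mcF^o_1}{\d t^\alpha_a}$ by differentiating the formula of Theorem~\ref{theorem:explicit formula in the open genus 1} with the chain rule, and expand the right-hand side of the display using the Lemma of the previous subsection, the identity $\sum_{i\ge0}(\Coef_{\phi^i}h)(\phi^\top)^i=h\big|_{\phi=\phi^\top}$ (together with the fact that under $v^\gamma\mapsto(v^\top)^\gamma$ the differential-polynomial quantities $v^\beta_x$ and $\d_x\Omega^{[0]}_{\gamma,0;\alpha,a}$ become $\d_x(v^\top)^\beta$ and $\d_x\frac{\d^2\mcF_0}{\d t^\gamma_0\d t^\alpha_a}$), and the extra term $\frac{\d\Gamma^{[0]}_{\alpha,a}}{\d\phi}\frac{\d\mcF^o_1}{\d t^\un_0}$ produced by the $\eps$-correction to $f_x$ inside the leading sum $\sum_i L_{\alpha,a,i}^{\intt;[0]}f_x^i$. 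I then expect a sequence of cancellations: the $\frac{\d G^o}{\d v^\beta}\eta^{\beta\gamma}\d_x\Omega^{[0]}_{\gamma,0;\alpha,a}$ term inside $L_{\alpha,a,i}^{\intt;[1]}$ matches the $\frac{\d G^o}{\d v^\beta}\frac{\d(v^\top)^\beta}{\d t^\alpha_a}$ term coming from $G^o(v^\top,\phi^\top)$; and after inserting $\frac{\d\mcF^o_1}{\d t^\un_0}$ from Theorem~\ref{theorem:explicit formula in the open genus 1} and using $\frac{\d^2\mcF^o_0}{\d t^\un_0\d t^\alpha_a}=\d_x\bigl(\Gamma^{[0]}_{\alpha,a}\big|_{v^\top,\phi^\top}\bigr)$, all remaining terms carrying a factor $\frac{\d G^o}{\d v^\beta}$ or $\frac{\d G^o}{\d\phi}$ cancel pairwise. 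What survives is the $\eps$-free identity
\begin{equation*}
\d_x\!\left(\left.\frac{\d\Gamma^{[0]}_{\alpha,a}}{\d\phi}\right|_{v^\top,\phi^\top}\cdot\frac{\d^2\mcF^o_0}{\d t^\un_0\d s_0}\right)=\frac{\d^3\mcF^o_0}{\d t^\alpha_a\d t^\un_0\d s_0},
\end{equation*}
and this follows at once from $\frac{\d\Gamma^{[0]}_{\alpha,a}}{\d\phi}\big|_{v^\top,\phi^\top}\cdot\frac{\d^2\mcF^o_0}{\d t^\un_0\d s_0}=\frac{\d^2\mcF^o_0}{\d t^\alpha_a\d s_0}$, itself obtained by applying $\frac{\d}{\d s_0}$ to $\frac{\d\mcF^o_0}{\d t^\alpha_a}=\Gamma^{[0]}_{\alpha,a}\big|_{v^\top,\phi^\top}$ and using $\frac{\d(v^\top)^\beta}{\d s_0}=0$. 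The boundary equations \eqref{eq:boundary equation} are handled identically, with $\bigl(\Gamma^{[0]}_{\alpha,a},\,\frac{\d}{\d t^\alpha_a},\,\frac{\d^2\mcF^o_0}{\d t^\alpha_a\d s_0}\bigr)$ replaced by $\bigl(\Delta^{[0]}_a,\,\frac{\d}{\d s_a},\,\frac{\d^2\mcF^o_0}{\d s_a\d s_0}\bigr)$; the absence of a $\d_x\Omega$-term in $L_{a,i}^{\boun;[1]}$ is exactly accounted for by $\frac{\d(v^\top)^\beta}{\d s_a}=0$.

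The hard part will be purely organizational. One has to keep the $\eps$-order bookkeeping precise — in particular, not to forget that the $\eps$-correction to $f_x$ feeds back through the leading coefficients $L_{\alpha,a,i}^{\intt;[0]}$ and yields the term $\frac{\d\Gamma^{[0]}_{\alpha,a}}{\d\phi}\frac{\d\mcF^o_1}{\d t^\un_0}$, without which the cancellations do not close — and one has to use repeatedly that $\d_x$, viewed as a differential operator, commutes with the substitution $v^\gamma\mapsto(v^\top)^\gamma$, $\phi\mapsto\phi^\top$ (which holds because, after $t^\delta_0\mapsto t^\delta_0+A^\delta x$, this substitution sends the jet variables $v^\gamma_i,\phi_i$ to the iterated $x$-derivatives of $(v^\top)^\gamma,\phi^\top$, i.e.\ these are solutions of the principal and extended principal hierarchies), which legitimizes every $\d_x$-manipulation above. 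Beyond this there is no analytic content: once the reduction to the displayed first-order form is in place, the statement is a bookkeeping consequence of Theorem~\ref{theorem:explicit formula in the open genus 1} and the genus-$0$ identities.
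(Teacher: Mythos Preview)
Your proposal is correct and follows essentially the same route as the paper: reduce to the first-order form \eqref{eq:linear PDE - first order approximation}, plug in Theorem~\ref{theorem:explicit formula in the open genus 1}, and match the $\eps^0$- and $\eps^1$-coefficients via the genus-$0$ identities and the principal/extended principal hierarchy. The only organizational difference is that the paper substitutes $\d_x\mcF^o_1=\tfrac{1}{2}\phi_{xs}/\phi_s+\d_xG^o$ into $f_x$ at the outset and then computes $\frac{\d\mcF^o_1}{\d t^\alpha_a}$ directly, whereas you keep $\frac{\d\mcF^o_1}{\d t^\un_0}$ symbolic and isolate the residual identity $\d_x\bigl(\frac{\d\Gamma^{[0]}_{\alpha,a}}{\d\phi}\,\phi_s\bigr)=\frac{\d^3\mcF^o_0}{\d t^\alpha_a\d t^\un_0\d s_0}$ at the end; these are the same computation rearranged, and your derivation of that identity from $\frac{\d}{\d s_0}$ applied to $\frac{\d\mcF^o_0}{\d t^\alpha_a}=\Gamma^{[0]}_{\alpha,a}|_{v^\top,\phi^\top}$ is exactly what the paper uses implicitly when it writes $(\phi_{t^\alpha_a})_s=\d_x\bigl(\frac{\d\Gamma^{[0]}_{\alpha,a}}{\d\phi}\phi_s\bigr)$.
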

\begin{proof}
Abusing notations let us denote the formal powers series $\left.\mcF^o_0\right|_{t^\gamma_0\mapsto t^\gamma_0+A^\gamma x}$, $\left.\mcF^o_1\right|_{t^\gamma_0\mapsto t^\gamma_0+A^\gamma x}$, $\left.(v^\top)^\alpha\right|_{t^\gamma_0\mapsto t^\gamma_0+A^\gamma x}$, and $\left.\phi^\top\right|_{t^\gamma_0\mapsto t^\gamma_0+A^\gamma x}$ by $\mcF^o_0$, $\mcF^o_1$, $v^\alpha$, and $\phi$, respectively. We can then write the statement of Theorem~\ref{theorem:explicit formula in the open genus 1} as
$$
\mcF^o_1=\frac{1}{2}\log\phi_s+G^o.
$$

\medskip

Let us prove Equation~\eqref{eq:internal equation} at the approximation up to~$\eps$. We have
$$
\d_x\left(\mcF^o_0+\eps\mcF^o_1\right)=\phi+\eps\left(\frac{1}{2}\frac{\phi_{xs}}{\phi_s}+\d_x G^o\right).
$$
Therefore, by Equation~\eqref{eq:linear PDE - first order approximation}, we have to check that
\begin{align*}
&\frac{\d}{\d t^\alpha_a}\left(\mcF^o_0+\eps\mcF^o_1\right)=\sum_{i\ge 0}L_{\alpha,a,i}^{\intt;[0]}\phi^i+\eps\sum_{i\ge 0}L_{\alpha,a,i}^{\intt;[1]}\phi^i+\\
&\hspace{3.2cm}+\eps\sum_{i\ge 0}L_{\alpha,a,i}^{\intt;[0]}\left(\frac{i(i-1)}{2}\phi^{i-2}\phi_x+i\phi^{i-1}\left(\frac{1}{2}\frac{\phi_{xs}}{\phi_s}+\d_x G^o\right)\right)\Leftrightarrow\\
\Leftrightarrow&\frac{\d}{\d t^\alpha_a}\left(\mcF^o_0+\eps\mcF^o_1\right)=\Gamma^{[0]}_{\alpha,a}+\eps\left(\sum_{i\ge 0}L_{\alpha,a,i}^{\intt;[1]}\phi^i+\frac{1}{2}\frac{\d^2\Gamma^{[0]}_{\alpha,a}}{\d\phi^2}\phi_x+\frac{1}{2}\frac{\d\Gamma^{[0]}_{\alpha,a}}{\d\phi}\frac{\phi_{xs}}{\phi_s}+\frac{\d\Gamma^{[0]}_{\alpha,a}}{\d\phi}\d_x G^o\right)\Leftrightarrow\\
\Leftrightarrow&\frac{\d\mcF^o_1}{\d t^\alpha_a}=\frac{1}{2}\frac{\d\Gamma^{[0]}_{\alpha,a}}{\d\phi}\frac{\phi_{xs}}{\phi_s}+\left(\frac{1}{2}\frac{\d^2\Gamma^{[0]}_{\alpha,a}}{\d\phi^2}+\frac{\d\Gamma^{[0]}_{\alpha,a}}{\d\phi}\frac{\d G^o}{\d\phi}\right)\phi_x+\frac{\d\Gamma^{[0]}_{\alpha,a}}{\d\phi}\frac{\d G^o}{\d v^\beta}v^\beta_x+\sum_{i\ge 0}L_{\alpha,a,i}^{\intt;[1]}\phi^i.
\end{align*}

\medskip

Using the definition of $L^{\intt;[1]}_{\alpha,a,i}$, we see that the last equation is equivalent to
\begin{align}
&\frac{\d\mcF^o_1}{\d t^\alpha_a}=\frac{1}{2}\frac{\d\Gamma^{[0]}_{\alpha,a}}{\d\phi}\frac{\phi_{xs}}{\phi_s}+\left(\frac{1}{2}\frac{\d^2\Gamma^{[0]}_{\alpha,a}}{\d\phi^2}+\frac{\d\Gamma^{[0]}_{\alpha,a}}{\d\phi}\frac{\d G^o}{\d\phi}\right)\phi_x+\notag\\
&\hspace{1.3cm}+\left(\frac{\d G^o}{\d\phi}\frac{\d\Gamma^{[0]}_{\alpha,a}}{\d v^\beta}+\frac{1}{2}\frac{\d^2\Gamma^{[0]}_{\alpha,a}}{\d v^\beta\d\phi}\right)v^\beta_x+\frac{\d G^o}{\d v^\beta}\eta^{\beta\gamma}\d_x\Omega^{[0]}_{\gamma,0;\alpha,a}\Leftrightarrow\notag\\
\Leftrightarrow&\frac{\d\mcF^o_1}{\d t^\alpha_a}=\frac{1}{2}\frac{\d\Gamma^{[0]}_{\alpha,a}}{\d\phi}\frac{\phi_{xs}}{\phi_s}+\frac{1}{2}\d_x\frac{\d\Gamma^{[0]}_{\alpha,a}}{\d\phi}+\frac{\d G^o}{\d\phi}\d_x\Gamma^{[0]}_{\alpha,a}+\frac{\d G^o}{\d v^\beta}\eta^{\beta\gamma}\d_x\Omega^{[0]}_{\gamma,0;\alpha,a}.\label{eq:intermediate equation for the proof of linear PDE}
\end{align}
On the other hand, we compute
\begin{align*}
\frac{\d\mcF^o_1}{\d t^\alpha_a}=&\left(\frac{1}{2}\log\phi_s+G^o\right)_{t^\alpha_a}=\frac{1}{2}\frac{\left(\phi_{t^\alpha_a}\right)_s}{\phi_s}+\frac{\d G^o}{\d v^\beta}\eta^{\beta\gamma}\d_x\Omega^{[0]}_{\gamma,0;\alpha,a}+\frac{\d G^o}{\d\phi}\d_x\Gamma^{[0]}_{\alpha,a}=\\
=&\frac{1}{2}\frac{\d_x\left(\Gamma^{[0]}_{\alpha,a}\right)_s}{\phi_s}+\frac{\d G^o}{\d v^\beta}\eta^{\beta\gamma}\d_x\Omega^{[0]}_{\gamma,0;\alpha,a}+\frac{\d G^o}{\d\phi}\d_x\Gamma^{[0]}_{\alpha,a}=\\
=&\frac{1}{2}\frac{\d_x\left(\frac{\d\Gamma^{[0]}_{\alpha,a}}{\d\phi}\phi_s\right)}{\phi_s}+\frac{\d G^o}{\d v^\beta}\eta^{\beta\gamma}\d_x\Omega^{[0]}_{\gamma,0;\alpha,a}+\frac{\d G^o}{\d\phi}\d_x\Gamma^{[0]}_{\alpha,a}=\\
=&\frac{1}{2}\d_x\frac{\d\Gamma^{[0]}_{\alpha,a}}{\d\phi}+\frac{1}{2}\frac{\d\Gamma^{[0]}_{\alpha,a}}{\d\phi}\frac{\phi_{xs}}{\phi_s}+\frac{\d G^o}{\d v^\beta}\eta^{\beta\gamma}\d_x\Omega^{[0]}_{\gamma,0;\alpha,a}+\frac{\d G^o}{\d\phi}\d_x\Gamma^{[0]}_{\alpha,a},
\end{align*}
which proves Equation~\eqref{eq:intermediate equation for the proof of linear PDE} and, hence, Equation~\eqref{eq:internal equation} at the approximation up to~$\eps$.

\medskip

The proof of Equation~\eqref{eq:boundary equation} is analogous.
\end{proof}


\section{Expectation in higher genera}

Consider a total descendent potential $\mcF(t^*_*,\eps)=\sum_{g\ge 0}\eps^{2g}\mcF_g(t^*_*)$ of some rank $N$.

\begin{expectation}\label{expectation}
Under possibly some additional assumptions, there exists a reasonable geometric construction of an open descendent potential in all genera $\mcF^o(t^*_*,s^*,\eps)=\sum_{g\ge 0}\eps^g\mcF^o_g(t^*_*,\eps)$ satisfying the following properties:
\begin{itemize}
\item The functions $\mcF^o_0$ and $\mcF^o_1$ are open descendents potentials in genus $0$ and $1$, respectively (according to Definitions~\ref{definition:open descendent potential in genus 0} and~\ref{definition:open descendent potential in genus 1}).
\item The function $\mcF^o$ satisfies the open string equation in all genera
$$
\sum_{b\ge 0}t^\beta_{b+1}\frac{\d\mcF^o}{\d t^\beta_b}+\sum_{a\ge 0}s_{a+1}\frac{\d\mcF^o}{\d s_a}-\frac{\d\mcF^o}{\d t^\un_0}=-s_0+C\eps,
$$
where $C$ is some constant.
\item Consider formal variables $w^1,\ldots,w^N$. Then there exist differential operators $L_{\alpha,a}^{\full,\intt}$, $1\le\alpha\le N$, $a\ge 0$, and $L_a^{\full,\boun}$, $a\ge 0$, of the form
\begin{align*}
&L_{\alpha,a}^{\full,\intt}=\sum_{i\ge 0}L_{\alpha,a,i}^{\full,\intt}(w^*_*,\eps)(\eps\d_x)^i, && L_{\alpha,a,i}^{\full,\intt}\in\hcA_{w^1,\ldots,w^n;0},\\
&L_a^{\full,\boun}=\sum_{i\ge 0}L_{a,i}^{\full,\boun}(w^*_*,\eps)(\eps\d_x)^i, && L_{a,i}^{\full,\boun}\in\hcA_{w^1,\ldots,w^n;0},\\
&L_{\alpha,a,i}^{\full,\intt}=\left.L_{\alpha,a,i}^{\intt}\right|_{v^\beta_b=w^\beta_b}+O(\eps^2),\\
&L_{a,i}^{\full,\boun}=\left.L_{a,i}^{\boun}\right|_{v^\beta_b=w^\beta_b}+O(\eps^2),
\end{align*}
such that the formal power series $w^\beta=\left.\eta^{\beta\mu}\frac{\d^2\mcF}{\d t^\mu_0\d t^\un_0}\right|_{t^\gamma_0\mapsto t^\gamma_0+A^\gamma x}$ and $f=\left.\mcF^o\right|_{t^\gamma_0\mapsto t^\gamma_0+A^\gamma x}$ satisfy the system of PDEs
\begin{align}
&\frac{\d}{\d t^\alpha_a}\exp(\eps^{-1} f)=\eps^{-1}L_{\alpha,a}^{\full,\intt}\exp(\eps^{-1} f),&& 1\le\alpha\le N,&& a\ge 0,\label{eq:full internal equation}\\
&\frac{\d}{\d s_a}\exp(\eps^{-1} f)=\eps^{-1}L_a^{\full,\boun}\exp(\eps^{-1} f),&& && a\ge 0.\label{eq:full boundary equation}
\end{align}
\end{itemize}
\end{expectation}

\medskip

Suppose that there exists a Dubrovin--Zhang hierarchy corresponding to our total descendent potential $\mcF$ (this is true when, for example, the associated Dubrovin--Frobenius manifold is semisimple). It is easy to show that if Expectation~\ref{expectation} is true, then the flows $\frac{\d}{\d t^\alpha_a}$ and $\frac{\d}{\d s_b}$ pairwise commute, which means that
\begin{align}
&\frac{\d L^{\full,\intt}_{\alpha,a}}{\d t^\beta_b}-\frac{\d L^{\full,\intt}_{\beta,b}}{\d t^\alpha_a}+\eps^{-1}\left[L^{\full,\intt}_{\alpha,a}, L^{\full,\intt}_{\beta,b}\right]=0, && 1\le\alpha,\beta\le N, &&a,b\ge 0,\notag\\
&\frac{\d L^{\full,\boun}_a}{\d t^\beta_b}+\eps^{-1}\left[L^{\full,\boun}_a, L^{\full,\intt}_{\beta,b}\right]=0, && 1\le\beta\le N, &&a,b\ge 0,\label{eq:full s-t compatibility}\\
&\left[L^{\full,\boun}_a, L^{\full,\boun}_b\right]=0, && && a,b\ge 0,\notag
\end{align}
where the derivatives $\frac{\d L^{\full,\intt}_{\alpha,a}}{\d t^\beta_b}$ and $\frac{\d L^{\full,\boun}_a}{\d t^\beta_b}$ are computed using the flows of the Dubrovin--Zhang hierarchy. Note that Equation~\eqref{eq:full s-t compatibility} potentially gives a Lax description of the Dubrovin--Zhang hierarchy (see an alternative approach in~\cite{CvdLPS14}).

\end{document}